\def\BibTeX{{\rm B\kern-.05em{\sc i\kern-.025em b}\kern-.08em
    T\kern-.1667em\lower.7ex\hbox{E}\kern-.125emX}}
\newcommand{\subparagraph}{}
\begin{document}

\date{}



\title{Learning-Augmented Frequency Estimation in Sliding Windows}

\author[1]{Rana Shahout}

\author[2]{Ibrahim Sabek}

\author[1]{Michael Mitzenmacher}

\affil[1]{Harvard University, USA}
\affil[2]{University of Southern California, USA}

\newif\ifcomm
\commtrue
\ifcomm
\else
\commfalse
\fi
\ifcomm
	\newcommand{\mycomm}[3]{{\footnotesize{{\color{#2} \textbf{[#1: #3]}}}}}
	\newcommand{\CRdel}[1]{\textcolor{red}{\sout{#1}}}
\else
    \newcommand{\mycomm}[3]{}
    \newcommand{\CRdel}[1]{}
\fi
\newcommand{\rana}[1]{\mycomm{Rana}{red}{#1}} 
\def\delequal{\mathrel{\ensurestackMath{\stackon[1pt]{=}{\scriptscriptstyle\Delta}}}}

\newtheorem{theorem}{Theorem}
\newtheorem{definition}{Definition}
\newtheorem{corollary}{Corollary}
\newtheorem{lemma}{Lemma}
\newtheorem{observation}[theorem]{Observation}
\newenvironment{sproof}{%
	\renewcommand{\proofname}{Proof Sketch}\proof}{\endproof}

\newcommand{\HHLproblem}{$(\theta,\epsilon,\delta)$-HH-\revision{quantiles}}
\newcommand{\HHLproblemParams}[3]{$(#1,#2,#3)$-HH-\revision{quantiles}}

\newcommand{\WHHLproblem}{$(\theta,\epsilon,\delta)$-$\mathit{Weighted}$-HH-quantiles}
\newcommand{\WHHWLproblem}{$(\theta,\epsilon,\delta)$-$\mathit{Weighted}^2$-HH-quantiles}

\ifdefined\EXTENDED
\newcommand{\matrixCellWidth}{5.8cm}
\else
\newcommand{\matrixCellWidth}{5.5cm}
\fi
\newcommand{\smatrixCellWidth}{4.5cm}

\newcommand{\tableDis}{d}
\renewcommand{\mid}{\ensuremath{:}}
\newcommand{\curFrameFreq}{\ensuremath{\mathit{cFreq}}}
\newcommand{\eps}{\epsilon}
\newcommand{\brackets}[1]{\ensuremath{\left[#1\right]}}
\newcommand{\BS}{\ensuremath{\mathit{BlockSize}}}
\newcommand{\Tables}{\ensuremath{\mathit{Tables}}}
\newcommand{\incTables}{\ensuremath{\mathit{incTables}}}
\newcommand{\incTable}{\ensuremath{\mathit{incTable}}}
\newcommand{\ghostTables}{\ensuremath{\mathit{ghostTables}}}
\newcommand{\OFFSET}{\ensuremath{\mathit{offset}}}
\newcommand{\blockSize}{\ensuremath{\mathit{s}}}
\newcommand{\IFText}{IntervalFrequency}
\newcommand{\WText}{WindowFrequency}
\newcommand{\QWText}{WFrequency}
\newcommand{\FText}{Frequency}

\newcommand{\IFTextD}{Interval-Frequency}
\newcommand{\TIFText}{TimeIntervalFrequency}
\newcommand{\IHHText}{IntervalHeavyHitters}
\newcommand{\SIText}{Interval}
\newcommand{\SIproblemParams}{n}
\newcommand{\SIProblem}{${\SIproblemParams}${-}{\SIText}}
\newcommand{\frameOffset}{\ensuremath{\mathit{fo}}}
\newcommand{\SSInstance}{\ensuremath{\mathit{SS}}}
\newcommand{\SIQText}{\SIText{}Query}
\newcommand{\IFQText}{\IFText{}Query}
\newcommand{\TIFQText}{\TIFText{}Query}
\newcommand{\IHHQText}{\IHHText{}Query}
\newcommand{\problemParams}{(W,\epsilon)}

\newcommand{\timeProblemParams}{(\Tau,R,\epsilon)}
\newcommand{\IFProblem}{${\problemParams}${-}{\IFText}}
\newcommand{\WindowProblem}{${\problemParams}${-}{\WText}}
\newcommand{\QWindowProblem}{${\problemParams}${-}{\QWText}}
\newcommand{\FrequencyProblem}{${\problemParams}${-}{\QWText}}

\newcommand{\IFProblemD}{${\problemParams}${-}{\IFTextD}}
\newcommand{\ParameterizedIFProblem}[2]{${(#1,#2)}${-}{\IFText}}
\newcommand{\TIFProblem}{${\timeProblemParams}${-}{\TIFText}}
\newcommand{\IHHProblem}{${\problemParams}$\textbf{-}{\IHHText{}}}
\newcommand{\BIFProblem}{$\mathbf{\problemParams}${-}{\IFText}}
\newcommand{\BIHHProblem}{$\mathbf{\problemParams}$\textbf{-}{\IHHText{}}}
\newcommand{\IFQGuarantee}[1][]{$$\xIntervalFrequency \le \xIntervalFrequencyEstimator \le \xIntervalFrequency + \epsError.#1$$}
\newcommand{\WGuarantee}[1][]{$$\iFrequency \le \iFrequencyEstimator \le \iFrequency + \epsError.#1$$}

\newcommand{\intervalHHThreshold}{\ensuremath{\theta\cdotpa{j-i}}}
\newcommand{\IHHDef}{\ensuremath{\set{x\in\mathcal U\mid \xIntervalFrequency\ge \intervalHHThreshold}}}
\newcommand{\xWindowFrequency}[1][w]{\ensuremath{f^{#1}_i}}
\newcommand{\xTimeWindowFrequency}[1][t]{\ensuremath{h^{#1}_x}}
\newcommand{\xWindowFrequencyEstimator}{\ensuremath{\widehat{\xWindowFrequency}}}
\newcommand{\xIntervalFrequency}{\ensuremath{f^{k,j}_i}}
\newcommand{\iFrequency}{\ensuremath{f_i}}
\newcommand{\xTimeIntervalFrequency}{\ensuremath{h^{i,j}_x}}
\newcommand{\xTimeIntervalFrequencyEstimator}{\ensuremath{\widehat{\xTimeIntervalFrequency}}}
\newcommand{\xSetIntervalFrequency}{\ensuremath{g^{i, j}_x}}
\newcommand{\xSetWindowFrequency}[1][n]{\ensuremath{g^{#1}_x}}
\newcommand{\xSetWindowFrequencyEstimator}{\ensuremath{\widehat{\xSetWindowFrequency}}}
\newcommand{\xSetIntervalFrequencyEstimator}{\ensuremath{\widehat{\xSetIntervalFrequency}}}
\newcommand{\xParameterizedIntervalFrequency}[2]{\ensuremath{f^{#1,#2}_x}}
\newcommand{\xIntervalFrequencyEstimator}{\ensuremath{\widehat{\xIntervalFrequency}}}
\newcommand{\iFrequencyEstimator}{\ensuremath{\widehat{\iFrequency}}}
\newcommand{\IntervalHH}[1][\theta]{\ensuremath{HH_{#1}^{i,j}}}
\newcommand{\IntervalHHEstimator}{\ensuremath{\widehat{\IntervalHH}}}
\newcommand{\epsError}[1][W]{\ensuremath{#1 \epsilon}}
\newcommand{\epsNError}[1][N]{\ensuremath{#1 \epsilon}}

\newcommand{\IFQ}{{\sc {\IFQText}}}
\newcommand{\SIQ}{{\sc {\SIQText}}}
\newcommand{\IHHQ}{{\sc {\IHHQText}}}
\newcommand{\BIFQ}{{\sc \textbf{\IFQText}}}
\newcommand{\BTIFQ}{{\sc \textbf{\TIFQText}}}
\newcommand{\BSIQ}{{\sc \textbf{\SIQText}}}
\newcommand{\BIHHQ}{{\sc \textbf{\IHHQText}}}
\newcommand{\set}[1]{\left\{#1\right\}}
\newcommand{\ceil}[1]{ \left\lceil{#1}\right\rceil}
\newcommand{\floor}[1]{ \left\lfloor{#1}\right\rfloor}
\newcommand{\angles}[1]{ \left\langle{#1}\right\rangle}
\newcommand{\logp}[1]{\log\parentheses{#1}}
\newcommand{\clog}[1]{ \ceil{\log{#1}}}
\newcommand{\clogp}[1]{ \ceil{\logp{#1}} }
\newcommand{\flog}[1]{ \floor{\log{#1}}}
\newcommand{\parentheses}[1]{ \left({#1}\right)}
\newcommand{\abs}[1]{ \left|{#1}\right|}

\newcommand{\cdotpa}[1]{\cdot\parentheses{#1}}
\newcommand{\inc}[1]{$#1 \gets #1 + 1$}
\newcommand{\dec}[1]{$#1 \gets #1 - 1$}
\newcommand{\range}[2][0]{#1,1,\ldots,#2}
\newcommand{\frange}[1]{\set{\range{#1}}}
\newcommand{\xrange}[1]{\frange{#1-1}}
\newcommand{\oneOverE}{ \ensuremath{\eps^{-1}} }
\newcommand{\oneOverG}{ \frac{1}{\gamma} }
\newcommand{\oneOverT}{ \frac{1}{\tau} }
\newcommand{\smallMultError}{(1+o(1))}
\newcommand{\lowerbound}{\max \set{\log W ,\frac{1}{2\epsilon+W^{-1}}}}
\newcommand{\smallEpsLowerbound}{\window\logp{\frac{1}{\weps}}}
\newcommand{\smallEpsMemoryTheta}{$\Theta\parentheses{\smallEpsMemoryConsumption}$}
\newcommand{\smallEpsMemoryConsumption}{W\cdot\logp{\frac{1}{\weps}}}

\newcommand{\largeEpsRestriction}{For any \largeEps{},}
\newcommand{\largeEps}{$\eps^{-1} \le 2W\left(1-\frac{1}{\logw}\right)$}
\newcommand{\smallEpsRestriction}{For any \smallEps{},}
\newcommand{\smallEps}{$\eps^{-1}>2W\left(1-\frac{1}{\logw}\right)=2\window(1-o(1))$}
\newcommand{\bc}{{\sc Basic-Counting}}
\newcommand{\bs}{{\sc Basic-Summing}}
\newcommand{\windowcounting}{ {\sc $(W,\epsilon)$-Window-Counting}}

\newcommand{\window}{W}
\newcommand{\logw}{\log \window}
\newcommand{\flogw}{\floor{\log \window}}
\newcommand{\weps}{\window\epsilon}
\newcommand{\wt}{\window\tau}
\newcommand{\logweps}{\logp{\weps}}
\newcommand{\logwt}{\logp{\wt}}
\newcommand{\bitarray}{b}
\newcommand{\currentBlockIndex}{i}
\newcommand{\currentBlock}{\bitarray_{\currentBlockIndex}}
\newcommand{\remainder}{y}
\newcommand{\numBlocks}{n}
\newcommand{\sumOfBits}{B}
\newcommand{\iblockSize}{\frac{\numBlocks}{\window}}
\newcommand{\threshold}{\blockSize}
\newcommand{\halfBlock}{\frac{\window}{2\numBlocks}}
\newcommand{\blockOffset}{m}
\newcommand{\inputVariable}{x}

\newcommand{\bcTableColumnWidth}{1.5cm}
\newcommand{\bsTableColumnWidth}{1.7cm}
\newcommand{\bsExtendedTableColumnWidth}{3cm}
\newcommand{\bcExtendedTableColumnWidth}{2.8cm}
\newcommand{\bcNarrowTableColumnWidth}{1.5cm}
\newcommand{\bsNarrowTableColumnWidth}{1.5cm}
\newcommand{\bsWorstCaseTableColumnWidth}{2cm}

\newcommand{\bsrange}{ R }
\newcommand{\bsReminderPercisionParameter}{ \gamma }
\newcommand{\bsest}{ \widehat{\bssum}}
\newcommand{\bssum}{ S^W }
\newcommand{\bsFracInput}{ \inputVariable' }
\newcommand{\bserror}{ \bsrange\window\epsilon }
\newcommand{\bsfractionbits}{ \frac{\bsReminderPercisionParameter}{\epsilon} }
\newcommand{\bsReminderFractionBits}{ \upsilon}
\newcommand{\bsAnalysisTarget}{ \bssum}
\newcommand{\bsAnalysisEstimator}{ \widehat{\bsAnalysisTarget}}
\newcommand{\bsAnalysisError}{ \bsAnalysisEstimator - \bsAnalysisTarget}
\newcommand{\bsRoundingError}{ \xi}


\newcommand{\neps}{\ensuremath{\winSize\eps}}
\newcommand{\Neps}{\ensuremath{\maxWinSize\eps}}
\newcommand{\logn}{\ensuremath{\log\winSize}}
\newcommand{\logN}{\ensuremath{\log\maxWinSize}}
\newcommand{\logneps}{\ensuremath{\logp\neps}}
\newcommand{\logNeps}{\ensuremath{\logp\Neps}}
\newcommand{\oneOverEps}{\ensuremath{\frac{1}{\eps}}}
\newcommand{\winSize}{\ensuremath{n}}
\newcommand{\maxWinSize}{\ensuremath{N}}
\newcommand{\curTime}{\ensuremath{t}}
\newcommand\Tau{\mathrm{T}}
\newcommand{\offset}{\ensuremath{\mathit{offset}}}
\newcommand{\roundedOOE}{k}
\newcommand{\numLargeBlocks}{\frac{\roundedOOE}{4}}
\newcommand{\numSmallBlocks}{\frac{\roundedOOE}{2}}

\newcommand{\remove}{{\sc Remove()}}
\newcommand{\merge}[1]{{\sc Merge(#1)}}
\newcommand{\counting}{{\sc Counting}}
\newcommand{\summing}{{\sc Summing}}
\newcommand{\freq}{{\sc Frequency Estimation}}

\newcommand{\RSS}{Rounded Space Saving}
\newcommand{\rss}{RSS}
\newcommand{\slack}{1+\gamma}
\newcommand{\nrCounters}{\ceil{\frac{\slack}{\epsilon}}}
\newcommand{\nrCountersLetter}{\mathfrak{C}}
\newcommand{\step}{\ensuremath{\floor{\frac{M\cdot\gamma}{2} + 1}}}
\newcommand{\stepLetter}{\mathpzc{s}}
\newcommand{\frqEst}{$(\epsilon,M)$-{\sc Volume Estimation}}
\newcommand{\heavyHitters}{$(\theta,\epsilon,M)$-{\sc Weighted Heavy Hitters}}
\newcommand{\query}{{\sc Query$(x)$}}
\newcommand{\winQuery}{{\sc WinQuery$(x)$}}
\newcommand{\windowQueryText}{{\sc WinQuery}}
\newcommand{\windowQuery}{\windowQueryText$(x,w)$}
\newcommand{\interWinQuery}{{\sc InterWinQuery$(x, i, j)$}}

\newcommand{\SSS}{CSS}
\newcommand{\SpaceS}{Space-Saving}
\newcommand{\CSS}{Compact \SpaceS{}}
\newcommand{\WCSS}{Window \CSS{}}
\newcommand{\SSSInstance}{rss}
\newcommand{\queueOfOverflows}{b}
\newcommand{\sumOfOverflows}{B}
\newcommand{\skiplist}{A}
\newcommand{\idtoidx}{IDToIndex}
\newcommand{\IDArray}{O}
\newcommand{\overflowidx}{idx}
\newcommand{\overflowIndicator}{u}

\newcommand{\streamcounting}{$\epsilon$\textsc{-Counting}}
\newcommand{\paremeterizedStreamcounting}[1]{#1\textsc{-Counting}}
\newcommand{\probabilisticStreamcounting}{$(\epsilon, \delta)$\textsc{-Counting}}
\newcommand{\probabilisticWindowcounting}{$(\window,\epsilon, \delta)$\textsc{-Window Counting}}
\newcommand{\reverseMapping}{$\mathbf R$}
\renewcommand{\gamma}{\phi}

\renewcommand{\epsilon}{\varepsilon}
\newcommand{\revision}[1]{\textcolor{black}{#1}} 

\newcommand{\lss}{LSS\xspace}
\newcommand{\lssFULL}{Learned Space Saving}

\newcommand{\lssplus}{LSS+\xspace}

\newcommand{\lsscbf}{LSS-CBF\xspace}
\newcommand{\lssasn}{LSS-AssignedEntries}
\newcommand{\lssdual}{LSS-Dual\xspace}

\newcommand{\alg}{LWCSS\xspace}
\newcommand{\algcbf}{LSS-CBF\xspace}

\newcommand{\optalg}{LWCSS+\xspace}

\maketitle

\begin{abstract}
We show how to utilize machine learning approaches to improve sliding window algorithms for approximate frequency estimation problems, under the ``algorithms with predictions'' framework.
In this dynamic environment, previous learning-augmented algorithms are less effective, since properties in sliding window resolution can differ significantly from the properties of the entire stream.
Our focus is on the benefits of predicting and filtering out items with large next arrival times -- that is, there is a large gap until their next appearance -- from the stream, which we show improves the memory-accuracy tradeoffs significantly.
We provide theorems that provide insight into how and by how much our technique can improve the sliding window algorithm, as well as experimental results using real-world data sets.
Our work demonstrates that predictors can be useful in the challenging sliding window setting.
\end{abstract}

\section{Introduction}
\label{sec:intro}

Stream processing plays a crucial role in various applications, such as network monitoring, intrusion detection systems, and sensor networks. Dealing with large and rapidly incoming data streams presents challenges in providing accurate responses to queries due to high computational and memory costs. Furthermore, these applications require time-efficient algorithms to cope with high-speed data streams. To that end, stream processing algorithms often build compact approximate sketches (synopses) of the input streams.

Estimating the frequency of certain items in a data stream is a fundamental step in data analysis. 
Several algorithms, such as Count-Min Sketch~\cite{cormode2005improved}, Count Sketch~\cite{charikar2002finding} have been proposed to estimate item frequencies in the data streams.

As time passes, newer data often becomes more relevant than older data, necessitating an aging mechanism for the sketches. 
In financial analysis, for instance, analysts prioritize current market trends, whereas in intrusion detection systems, recent intrusion attempts are of primary concern. In both cases, outdated information loses significance over time. Retaining old data not only consumes valuable memory resources but also introduces noise, complicating the analysis of recent, relevant data. 
Many applications realize this by tracking the stream's items over a sliding window. The sliding window model~\cite{datar2002maintaining} considers only a window of the most recent items in the stream, while older ones do not affect the quantity we wish to estimate. Indeed, standard approaches to the problem of maintaining different types of sliding window statistics have been extensively studied~\cite{datar2002maintaining, arasu2004approximate, ben2016heavy, papapetrou2015sketching, lee2006simpler}.


Recently, machine learning has been combined with traditional algorithms, yielding the paradigm of learning-augmented algorithms, also known as algorithms with predictions.
This approach aims to improve algorithms by leveraging advice from machine learning models in the form of predictions.
In the context of streaming algorithms, Hsu et al~\cite{hsu2019learning} introduced learning-based frequency estimation, where machine learning is utilized to predict the most frequent items, known as 'heavy hitters,' with the goal of reducing estimation errors. 
They utilized known hashing-based algorithms such as Count-Sketch and Count-Min Sketch, which approximately count item frequencies by hashing items to buckets. The learning process ensured that predicted heavy hitter items were not placed in the sketch but assigned dedicated counters, facilitating accurate counting.
In addition, \cite{shahout2024learning} described a learned-based algorithm for identifying heavy hitters, top k, and flow frequency estimation. They focus on the Space Saving algorithm and exclude predicted low-frequency items from updating the Space Saving, while ensuring that predicted heavy-hitter items remain in the structure. In this way, the Space Saving accuracy is improved.


However, these approaches do not directly translate to the sliding window model, where different items can become heavy hitters or have low frequency locally within a window as time passes. In other words, an item that is a heavy hitter for the entire stream may not be a heavy hitter in any particular data window.
Moreover, to the best of our knowledge, predictions have not been applied to the problem of approximate counting in the sliding window setting. In general, sliding window variations of approximate counting and other problems are generally more challenging than variations without sliding windows, so perhaps this is unsurprising.

In this work, we aim to demonstrate the applicability of predictions to sliding window algorithms for frequency estimation problems, focusing on how natural predictions of the gap between item arrivals can lead to easy-to-implement improvements in this setting.

We focus on Window Compact Space-Saving (WCSS)~\cite{ben2016heavy} and present the learned-based version, \alg{}.
Our primary idea is to exclude items that appear only once \textbf{in the sliding window} from being included in the data structure used for tracking item frequencies, inspired by LSS~\cite{shahout2024learning}.

However, LSS employs predictors that relate to the entire stream, which cannot be directly applied to a sliding window. An item with low frequency in the entire stream might not be low-frequency within a particular window, potentially leading to unbounded estimation errors.

Our key insight is the potential to exclude items that appear once by employing an effective predictor for each item's next arrival time in the stream. With a perfectly accurate predictor, we could identify single-occurrence items within the sliding window by determining whether they have not appeared previously in the stream, and
have a predicted next appearance time that exceeds the current window size. Predictors of next arrivals have been employed in for example caching with machine learning advice~\cite{lykouris2021competitive}.

We treat the next arrival predictor as a black box and do not delve into its internal functioning; our approach can therefore be utilized with any suitable learning scheme that produces a predictor. We analyze both theoretically and empirically the performance gains offered by predictors, as well as the robustness of our approach to predictor~errors.

We also explore potential future directions, such as developing a predictor capable of learning item frequencies within a given time frame and periodically retraining itself using transfer learning techniques to adapt to shifts in the data distribution across frames. Furthermore, we discuss the prospect of extending our approach to handle other types of queries in the sliding window setting.

\section{Problem Definition}
\label{sec:prelim}

Given a \emph{universe} $\mathcal{U}$, a \emph{stream} $\mathcal{S} = u_1, u_2, \ldots \in \mathcal{U}^N$ is a sequence of arrivals from the universe. (We assume the stream is finite here.)
We denote by $f_i^{W}$ the frequency of item $i$ over the last $W$ arrivals.

We seek algorithms that support the following operations:
\begin{itemize}
	\item {\sc \textbf{ADD}$\bm{(i)}$}: given an element $i\in\mathcal U$, append $i$ to $\mathcal S$.
	\item {\sc \textbf{Query}$\bm{(i)}$}: return an estimate $\widehat{f_i^{W}}$ of $f_i^{W}$
\end{itemize}

\begin{definition}
\label{definition:wfrequency_error_guarantee}
An algorithm solves \QWindowProblem{} for if given any Query$(i)$ it returns $\widehat{f_i^{W}}$ satisfying
$$f_i^{W} \le \widehat{f_i^{W}} \le f_i^{W}  + \epsError.$$

\end{definition}

\section{Sliding Window Algorithms}
\label{sec:w_algs}

Sliding window algorithms have two conceptual steps: removing outdated data and recording incoming data.
These algorithms aim to avoid storing the entire window sequence, as the window size is typically large. 
Due to such memory limitations, approximation algorithms are desirable and often employed.  
Such algorithms typically do not maintain individual counters for each item, but instead employ an approximate counting algorithm to monitor item~frequencies.

In this work, we choose Window Compact Space Saving (WCSS)~\cite{ben2016heavy} as our underlying algorithm\footnote{Our approach could be applied to other sliding window streaming algorithms that follow the same setup of dividing the stream into frames (e.g.~\cite{basat2018stream, gou2020sliding})}.
WCSS solves \QWindowProblem{} when the queries are frequency queries.
It divides the stream into frames of size $W$ and each frame is divided into $k$ equal-sized blocks as illustrated in Figure~\ref{fig:window_setup}.
The query window also has size $W$ and, importantly, it overlaps with no more than 2 frames at any given time.

WCSS counts how many times each item arrived during the last frame.
When the counter exceeds a multiple of the block size ($\frac{W}{k}$, $\frac{2W}{k}$, etc), WCSS identifies it as an overflow.
The algorithm keeps track of the item ID associated with each overflow,
and selectively keeps only overflowed items for past blocks. To identify these overflowed items, WCSS uses an instance of the Space Saving algorithm (SS)~\cite{metwally2005efficient} which is re-initialized as empty at the beginning of each frame.
In addition, WCSS employs specific data structures to keep track of the overflowed item IDs, denoted by \emph{overflowsRecord}.

In WCSS, \emph{overflowsRecord} contains a collection of up to $k + 1$ queues.
Each queue corresponds to a block that overlaps with the current window. Within each queue, WCSS stores the IDs of items that have overflowed in the corresponding block. Whenever an item overflows, WCSS appends its ID to the current block’s queue. When a block ends, WCSS removes the oldest queue from the collection.
In order to estimate the frequency of an item within a given window, WCSS counts the overflow occurrences of the item and multiplies the result by $\frac{W}{k}$.
Similar setups are found in works that support other queries over sliding windows, such as in~\cite{basat2018stream}. In this work, \emph{overflowsRecord} contains a hierarchical tree structure consisting of frequency tables. These tables store overflowed item IDs along with their respective frequencies within block intervals.

\begin{figure}
    \centering
    \includegraphics[width = 6.5cm]{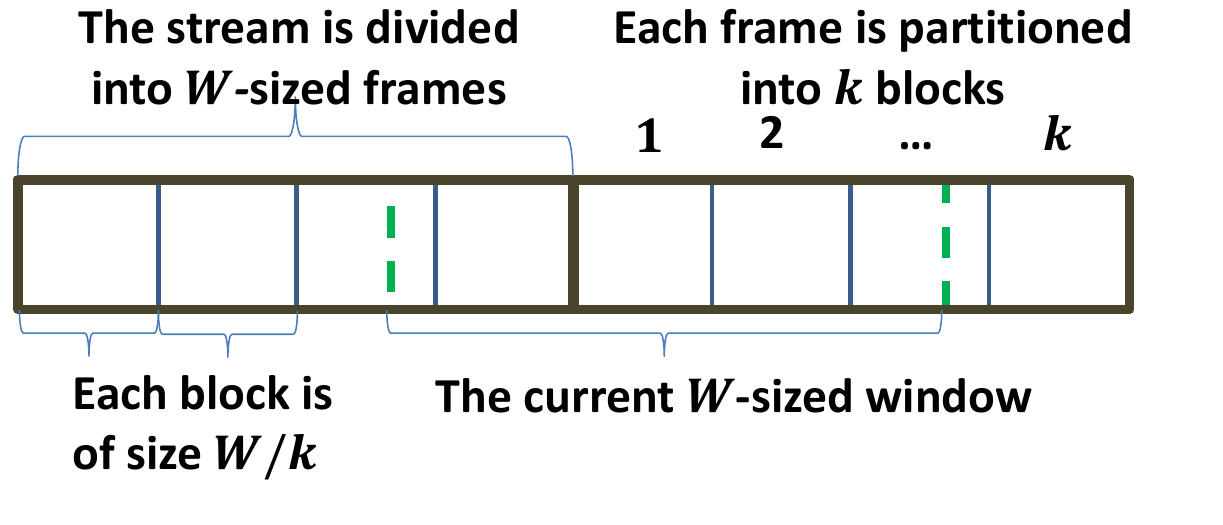}
    \caption{WCSS algorithm overview (adapting a figure in~\cite{ben2016heavy})}
    \label{fig:window_setup}
\end{figure}

\section{Learned Sliding Window Algorithm}
\label{sec:learned_wcss}

\subsection{Overview}
\label{sec:overview}

Given the specific characteristics of the sliding window setup, we want a predictor that offers a time-sensitive approach, in contrast to a predictor for whether an item is a heavy hitter over the stream, which is time-insensitive.
We employ an oracle that, for a specific item $i$ and timestamp $t$, predicts the next occurrence of $i$,
or equivalently how many arrivals will occur before item $i$ appears again.
A detailed description of this predictor, outlining its structure and functionality, will be provided later in this section.

Based on this prediction, \alg{} excludes items predicted to next appear again beyond the frame size from being inserted in the SS instance.
These items are referred to as \emph{non-essential items}.
As the SS instance is reset at the beginning of each frame, we obtain a ``cleaner'' SS instance, as proposed in~\cite{shahout2024learning}.
Note that the \emph{non-essential items} are not consistently ignored in every occurrence; rather we consider each arrival of the item and take an appropriate action based on the predictor. Note that if we had perfect predictions for non-essential items, then not including them in the Space Saving instance allows us to minimize memory overhead with no cost in terms of~accuracy.

However, machine learning methodologies are inherently imperfect, and they may exhibit errors, including substantial and unexpected errors. Therefore, as suggested in the algorithms with predictions literature (see, e.g.,~\cite{mitzenmacher2022algorithms}), algorithms based on machine learning predictions must demonstrate sufficient robustness to handle errors that may occur.  In particular, the notion of robustness that has become common 
is that the performance of algorithms using predictions should not be significantly inferior to that of conventional online algorithms that do not rely on predictions, even if predictions are inaccurate.
Unless some mitigating structure is added, ignoring items predicted as non-essential can lead to unbounded errors.
For example, if a heavy hitter of the current frame is predicted incorrectly as a non-essential item, this item is excluded from the sliding window algorithm, violating the error guarantee (Definition~\ref{definition:wfrequency_error_guarantee}).

We therefore apply the idea of keeping a Bloom filter of predicted non-essential items previously ignored during the current frame to ensure robustness, suggested by~\cite{shahout2024learning}.
The Bloom filter, like the Space Saving instance, is flushed at the beginning of each frame. 

There are two sources of estimation error when some arrivals of item $i$ are predicted to exceed the window size.
The first source of error arises when item arrivals are predicted incorrectly to exceed the window size. Ideally, these arrivals should have been captured by the Bloom filter and inserted back into the SS. However, when we insert item $i$ into the Bloom filter for the first arrival that exceeds the window size, we do not add it to the SS. Thus, if the first arrival is mispredicted, we may miss one insertion (otherwise, it is correct not to insert this arrival into the SS).
Second, the Bloom filter flushing at the beginning of a new frame causes an underestimation. Since the queried intervals can overlap with at most two frames, this imposes an underestimation error of no more than $1$. To compensate for these two sources of error occurring together, we add $2$ to the query.

Importantly, because the predictor is concerned with the item's next arrival, the number of \emph{single occurrence items} within the frame exceeds the number of single occurrence items across the entire stream and depends on the frame size as shown in Figure~\ref{fig:singles_ratio}.

\subsection{Next occurrence oracle}

\begin{figure}
    \centering
    \includegraphics[width = 5.5cm]{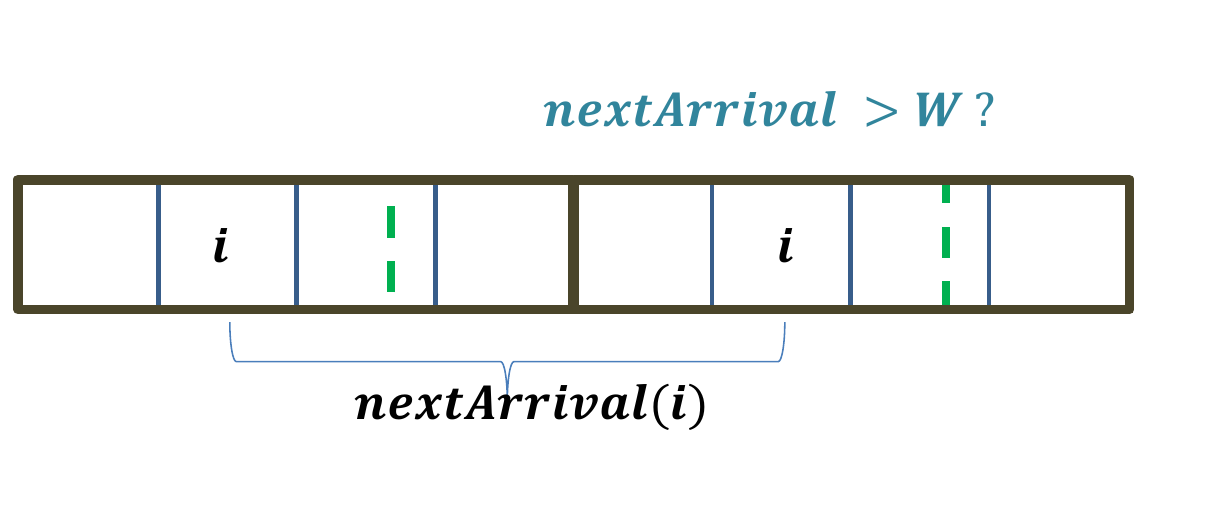}
    \caption{Next arrival prediction in sliding window setting.}
    \label{fig:learnedwindow_setup}
\end{figure}

In what follows, we use a slightly different approach that allows us to utilize a simpler predictor and still obtain strong performance.  Rather than try to predict the next arrival exactly, we predict whether the next arrival is larger than $W$ or not. 
If the next arrival is larger than $W$, then it clearly lies outside the frame (Figure~\ref{fig:learnedwindow_setup}).  This allows us to have our predictor perform a binary classification, rather than solve a regression problem.  


\begin{figure}
    \centering
    \includegraphics[width = 5.5cm]{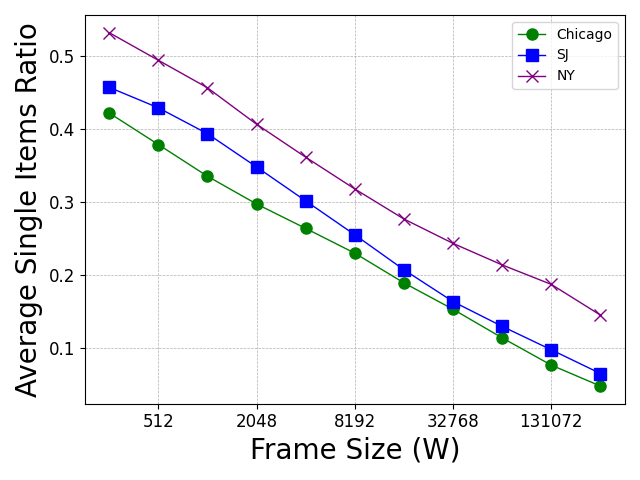}
    \caption{Average single items ratio vs. frame size using real-world traces (described in Section~\ref{sec:eval}).}
    \label{fig:singles_ratio}
\end{figure}

To construct the next occurrence oracle, we trained a neural network to predict if an item will show up again within a specified period of time (the next $W$ items in the stream). By converting the problem into a two-category classification problem, we simplify the prediction problem. The oracle is thus tasked to classify whether the next appearance of the item falls within our specified window size, tagged as 1, or exceeds this limit, indicated as 0.
Our model is a Recurrent Neural Network (RNN) that utilizes Long Short-Term Memory (LSTM) cell.  We chose this framework for its effectiveness in processing sequential data as demonstrated in prior research~\cite{kraska2018case}; however, we emphasize that other predictors could be used.

For our test cases, we focus on networking problems, and items are source-destination IP address pairs.  
The network begins by transforming the source and destination IP addresses into indices, which are then converted into embedding vectors. As a result of this process, IP addresses are represented in a dense and expressive manner. Our model uses these embeddings as input to the LSTM layer.
In the LSTM layer, the sequence of IP address embeddings is processed, in order to capture temporal dependencies. After the LSTM produces the final hidden states, the results are passed to a fully-connected (dense) layer that outputs one value.
This output value, after being passed through a sigmoid activation function, represents the model's prediction of whether the next appearance of an item falls within the specified window size $W$ or not. This prediction is generated based on the sequence of IP addresses provided as input.
The model is trained using Binary Cross-Entropy with Logits Loss (BCEWithLogitsLoss) and the Adam optimizer. After training, the model's performance is evaluated using the F1 score on a separate test dataset. 
Again, the implementation of such a predictor is our tailored approach, and it is just one of many possible options.
Depending on the requirements, our design may be replaced or enhanced with any other effective prediction~technique.

\subsection{Robustness Result}

Consider a sliding window algorithm $\mathcal A$ that follows the same setup of dividing the stream into frames. The WCSS algorithm is an example of such an algorithm. In the following, we prove the robustness of our approach in the general case using algorithm $\mathcal A$.
We show that our proposed algorithm is robust, in the sense that it cannot perform significantly worse than the corresponding algorithm that does not use predictions.

\begin{theorem}
\label{theorem:allow_correctness}
    Let $\mathcal A$ be an algorithm for $(W,\epsilon - \frac{2}{W})$-WFrequency. Then \alg{} solves \QWindowProblem{}.
\end{theorem}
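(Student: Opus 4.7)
The plan is to decompose the estimation error of \alg{} into two independent sources: the \emph{filtering undercount} $\Delta_i$ introduced by predicting some arrivals as non-essential and routing them through the Bloom filter instead of the underlying algorithm $\mathcal{A}$, and the intrinsic error of $\mathcal{A}$ on the substream it actually receives. The $+2$ correction that \alg{} adds at query time is intended to absorb the filtering undercount, while passing the tightened accuracy parameter $\epsilon - 2/W$ to $\mathcal{A}$ provides exactly the extra $W\cdot(2/W)=2$ units of slack needed to accommodate this correction in the upper bound. So the proof reduces to showing $0 \le \Delta_i \le 2$ and then plugging into $\mathcal{A}$'s guarantee.

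First, I would argue that within a single frame the undercount for any item $i$ is at most $1$. The Bloom filter is flushed at every frame boundary, so the reasoning is local to one frame. Consider the arrivals of $i$ within the frame on which the predictor returns ``non-essential'': the first such arrival that triggers a true-negative Bloom filter lookup causes $i$ to be inserted into the filter but not into $\mathcal{A}$, producing exactly one missed insertion. From that point on every subsequent lookup for $i$ is a positive (Bloom filters have no false negatives), so every remaining arrival of $i$ is forwarded to $\mathcal{A}$. Arrivals where the predictor said ``essential'' are always forwarded, and any Bloom filter false positives that happen before the first true negative only cause us to forward the arrival to $\mathcal{A}$, so they never contribute additional misses. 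Hence $\mathcal{A}$ sees a per-frame count of $i$ that is off by at most $1$. Since the query window of size $W$ overlaps at most two frames, the aggregate undercount satisfies $0 \le \Delta_i \le 2$, and we set $g_i^W := f_i^W - \Delta_i$ to denote the count of $i$ that $\mathcal{A}$ actually sees in the window.

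Second, I would invoke the hypothesis that $\mathcal{A}$ solves $(W,\epsilon - 2/W)$-WFrequency on the substream it receives, yielding
\[
g_i^W \le \widehat{g_i^W} \le g_i^W + W(\epsilon - 2/W) = g_i^W + W\epsilon - 2.
\]
\alg{} returns $\widehat{f_i^W} := \widehat{g_i^W} + 2$, so using $0 \le \Delta_i \le 2$ we obtain $\widehat{f_i^W} \ge g_i^W + 2 = f_i^W - \Delta_i + 2 \ge f_i^W$ and $\widehat{f_i^W} \le g_i^W + W\epsilon = f_i^W - \Delta_i + W\epsilon \le f_i^W + W\epsilon$, which is exactly the required guarantee. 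The main obstacle is the first step: one has to argue carefully that \emph{a single} Bloom filter insertion per item per frame suffices to cap the per-frame undercount at $1$ even when false-positive Bloom lookups can postpone that insertion, and even though the predictor is otherwise treated as a black box. The rest is essentially algebra, paired with the standard observation that a window overlaps at most two frames, which is precisely why the constant $2$ appears both in the $+2$ correction and in the tightened parameter $\epsilon - 2/W$.
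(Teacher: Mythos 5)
Your proposal is correct and follows essentially the same route as the paper's proof: bound the number of arrivals of $i$ withheld from $\mathcal{A}$ by $2$ (at most one per frame, since the Bloom filter forwards all subsequent predicted-non-essential arrivals, and a window of size $W$ intersects at most two frames with the filter flushed at frame boundaries), then combine the $(W,\epsilon-\frac{2}{W})$ guarantee of $\mathcal{A}$ on the filtered substream with the $+2$ query-time correction. Your write-up is somewhat more explicit about the Bloom filter mechanics (false positives only cause extra forwarding, never extra misses), but the decomposition and the final algebra match the paper's argument exactly.
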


\begin{proof}
For any item $i$ and any window of size $W$, we may underestimate the count of item $i$ by at most two.  This is because we may undercount an item once when it is predicted not to occur within the next $W$ steps and it is not already a positive from the Bloom filter.  As each window of length $W$ can intersect two frames, and the Bloom filter is reset to empty at frame boundaries, we can undercount an item at most twice over any $W$ consecutive steps.

We have \begin{align} \widehat{Q_i^{W}} \triangleq \mathcal{A}_{(W, \epsilon - \frac{2}{W})}.\mbox{{\sc Query$(i)$}} + 2.
\end{align}
That is, our estimate is the query result for $i$ from $\mathcal{A}$, which is an algorithm for
$(W,\epsilon - \frac{2}{W})$-WFrequency, with at most 2 instances of $i$ removed from the stream $\mathcal{A}$ processes and an extra count of 2 added back in.  It follows the smallest possible return value is 
$$(Q_i^{W}-2)+2 = Q_i^{W},$$ and the largest possible return value is 
$$\left (Q_i^{W} + \left (\epsilon - \frac{2}{W}\right )W \right ) + 2 = Q_i^{W} + \epsilon W,$$
proving the claim.
\end{proof}

\section{Evaluation}
\label{sec:eval}

In this section, we present an empirical study and compare WCSS and \alg{}.

\textbf{Experimental Setup.}
We implemented WCSS and LWCSS in Python 3.7.6.
The evaluation was performed on an AMD EPYC 7313 16-Core Processor with an NVIDIA A100 80GB PCIe GPU, running Ubuntu 20.04.6 LTS with Linux kernel 5.4.0-172-generic, and TensorFlow 2.4.1.
We extend WCSS by incorporating a predictor, which is built using an LSTM network, and compare it against traditional WCSS using real-world datasets.

\textbf{Datasets.} Our real datasets comprise CAIDA Internet Traces~\cite{CAIDACH16} obtained from multiple sources: (1) equinix-chicago 2016 high-speed monitor is located at an Equinix datacenter in Chicago and is connected to a backbone link of a Tier 1 ISP between Chicago, IL and Seattle, WA (2) equinix-sanjose 2014 monitor (SJ) is positioned at an Equinix datacenter in San Jose, CA, connected to a backbone link of a Tier 1 ISP between San Jose, CA and Los Angeles, CA (3) The equinix-nyc 2018 monitor is located at an Equinix datacenter in New York. These traces are summarized in Table~\ref{tbl:traces}.

For the synthetic data, we considered the Zipf distribution where the items are selected from a bounded universe, and the frequencies of an item with rank $R$ is given by $f(R, \alpha) = \frac{constant}{R^{\alpha}}$, where $\alpha$ is a skewness parameter.
For predictions, we employed a synthetic predictor that computes the true count for each item, and then adds an error which is drawn i.i.d. from a normal distribution with mean parameter 0 and standard deviation~$\sigma = 1$.

\begin{table}[t]
\centering
\caption{Traces from~\cite{CAIDACH16} used in the evaluation}
\begin{tabular}{l|l|l}
\toprule
Trace & \#Items & \#Uniques \\
\midrule
Chicago & 88,529,637 & 1,650,097 \\
 NY & 63,284,829 & 2,968,038  \\
 SJ & 188,511,031 & 2,922,904  \\
\bottomrule
\end{tabular}
\label{tbl:traces}
\end{table}

\begin{table}[t]
\centering
\caption{Predictor Accuracy}
\begin{tabular}{l|l}
\toprule
\textbf{Metric}          & \textbf{Value}               \\
\midrule
F1 Score (Chicago)       & 81.3\%                      \\
F1 Score (NY)       & 87.3\%                      \\
F1 Score (SJ)       & 83.5\%                      \\
\bottomrule
\end{tabular}
\label{tbl:predictor_eval}
\end{table}

\begin{figure*}[t]
\centering
	\begin{tabular}{cccc}
		\subfloat[Chicago]{\includegraphics[width = \smatrixCellWidth]{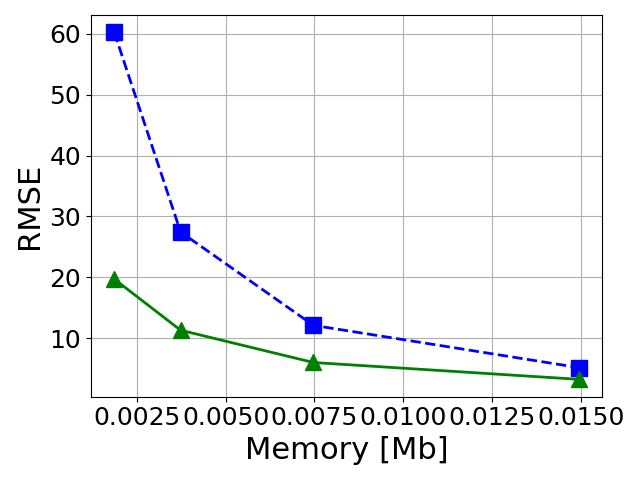}\label{fig:err_mem_chicago}} &
		\subfloat[NY]{\includegraphics[width = \smatrixCellWidth]{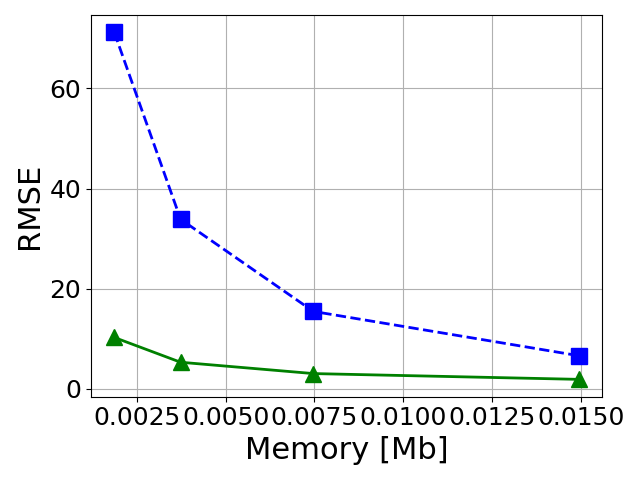}\label{fig:err_mem_ny}}
		\subfloat[SJ]{\includegraphics[width = \smatrixCellWidth]{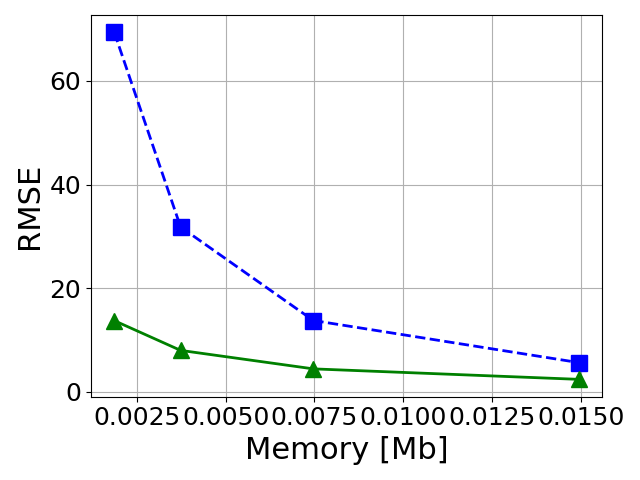}\label{fig:err_mem_sj}} &
          \subfloat[Synthetic]{\includegraphics[width = \smatrixCellWidth]{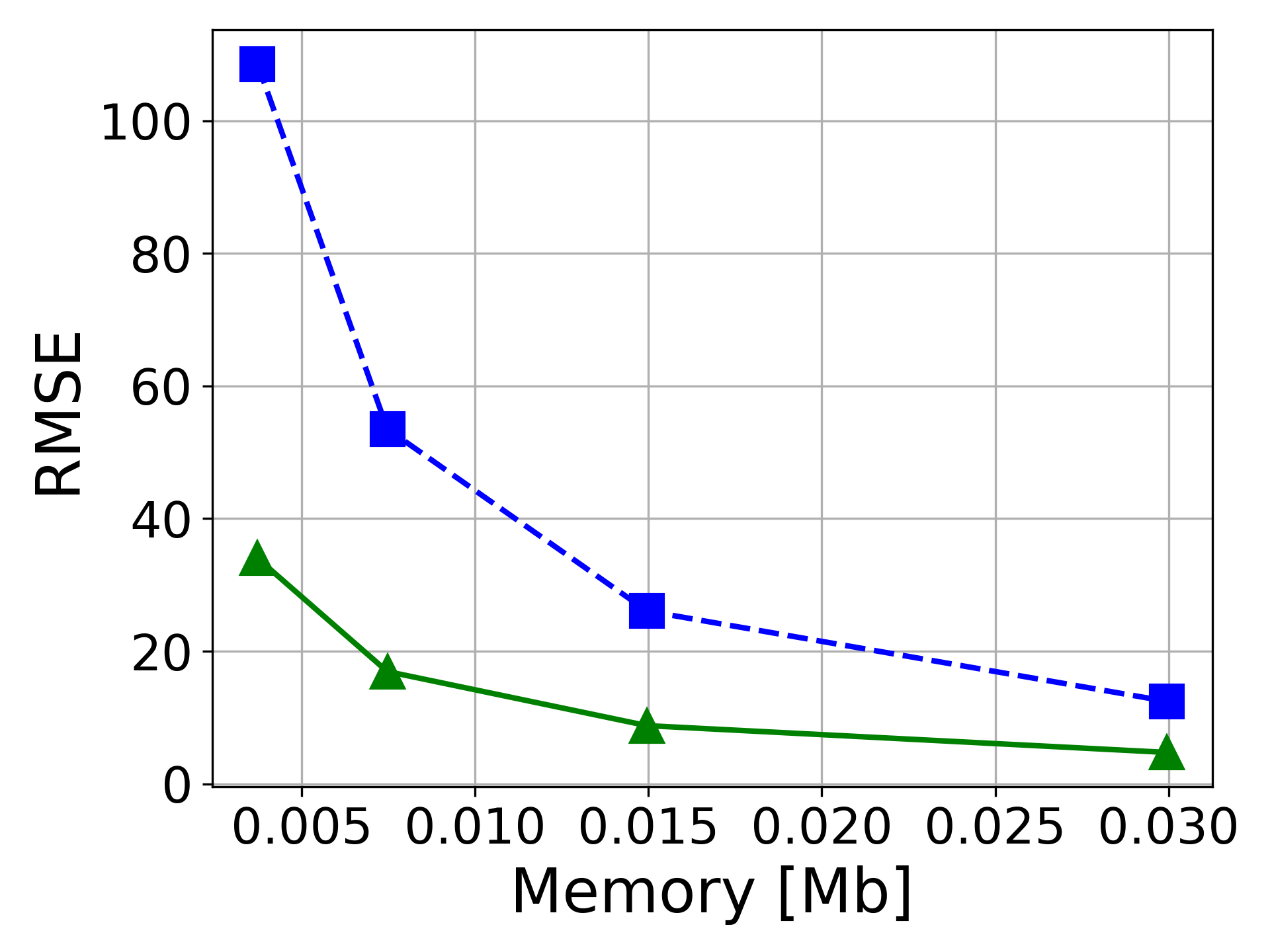}\label{fig:err_mem_synthetic}}
          \tabularnewline
        \multicolumn{3}{c}{\subfloat{\includegraphics[width = 3.5cm]{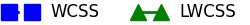}}}	
	\end{tabular}
	\caption{Accuracy comparison of WCSS and LWCSS vs. Memory (Megabytes) using real datasets (Chicago, NY and SJ)}
	\label{fig:lwcss_acc_vs_mem}
\end{figure*}

\textbf{Metrics.}
In the approximation error evaluation, we use Root Mean Square Error (RMSE). When an item is encountered, we query its frequency estimation immediately and calculate the RMSE.
For performance of operations, we use updates or queries per second.

\subsection{Accuracy Comparison}

Figure~\ref{fig:lwcss_acc_vs_mem} shows the accuracy (RMSE) as a function of memory for WCSS and LWCSS setting $W=2^{10}$. This evaluation includes three real datasets: Chicago, NY, and SJ, and utilizes a pre-trained LSTM predictor. The predictor's performance on each dataset is summarized in Table~\ref{tbl:predictor_eval}. The results demonstrate that LWCSS outperforms WCSS in terms of accuracy across all three datasets since the filtering of single-occurrence items enhances the SS's accuracy which is crucial for overall accuracy. Figure~\ref{fig:lwcss_acc_vs_mem} demonstratea the feasibility of our approach using the next arrival predictor.

\begin{figure*}[t]
	\center{
		\begin{tabular}{ccc}
                \subfloat{\includegraphics[width=\smatrixCellWidth]{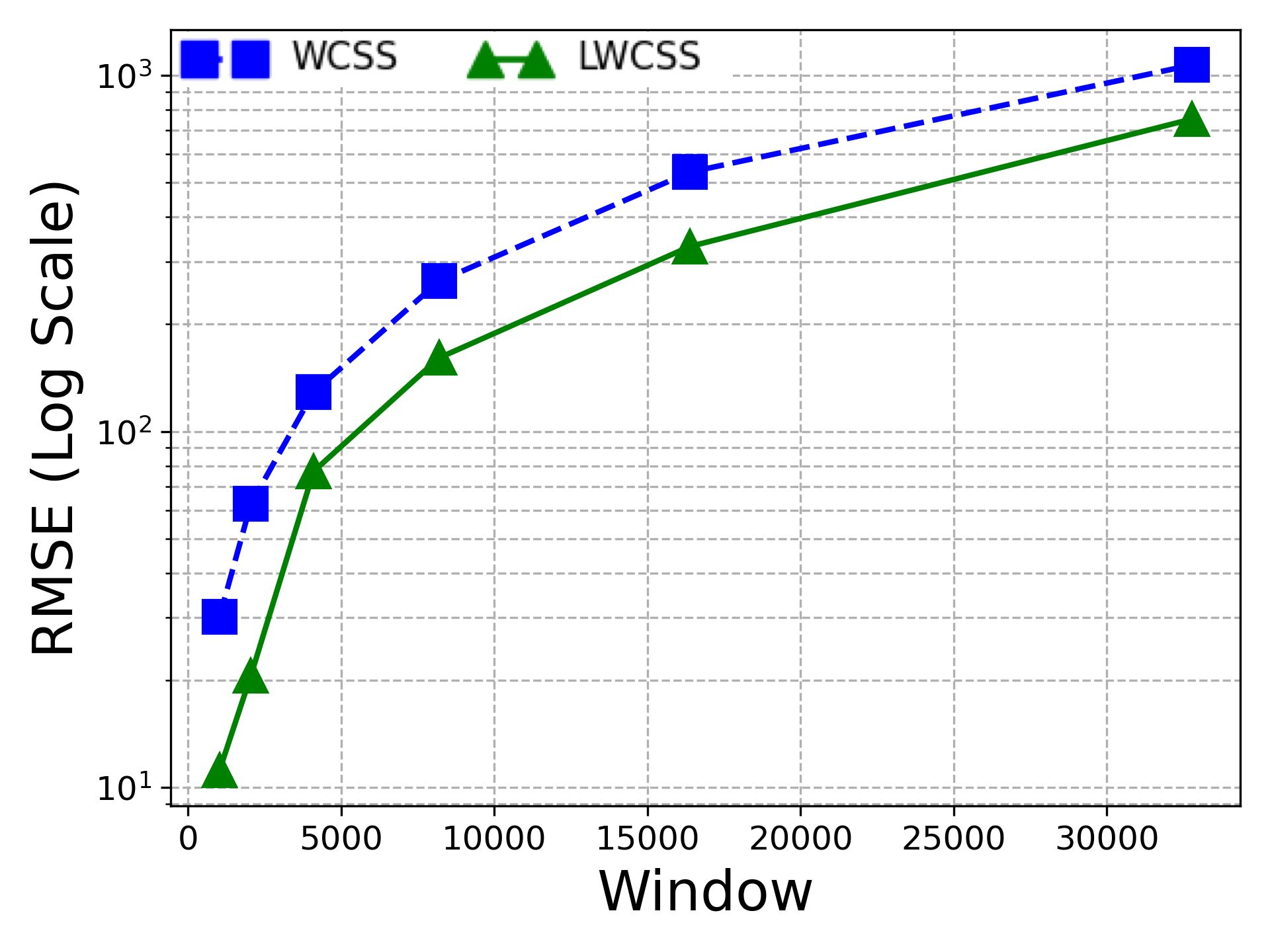} \label{fig:lwcss_acc_vs_window}}
               \subfloat[Query Performance]{\includegraphics[width=\smatrixCellWidth]{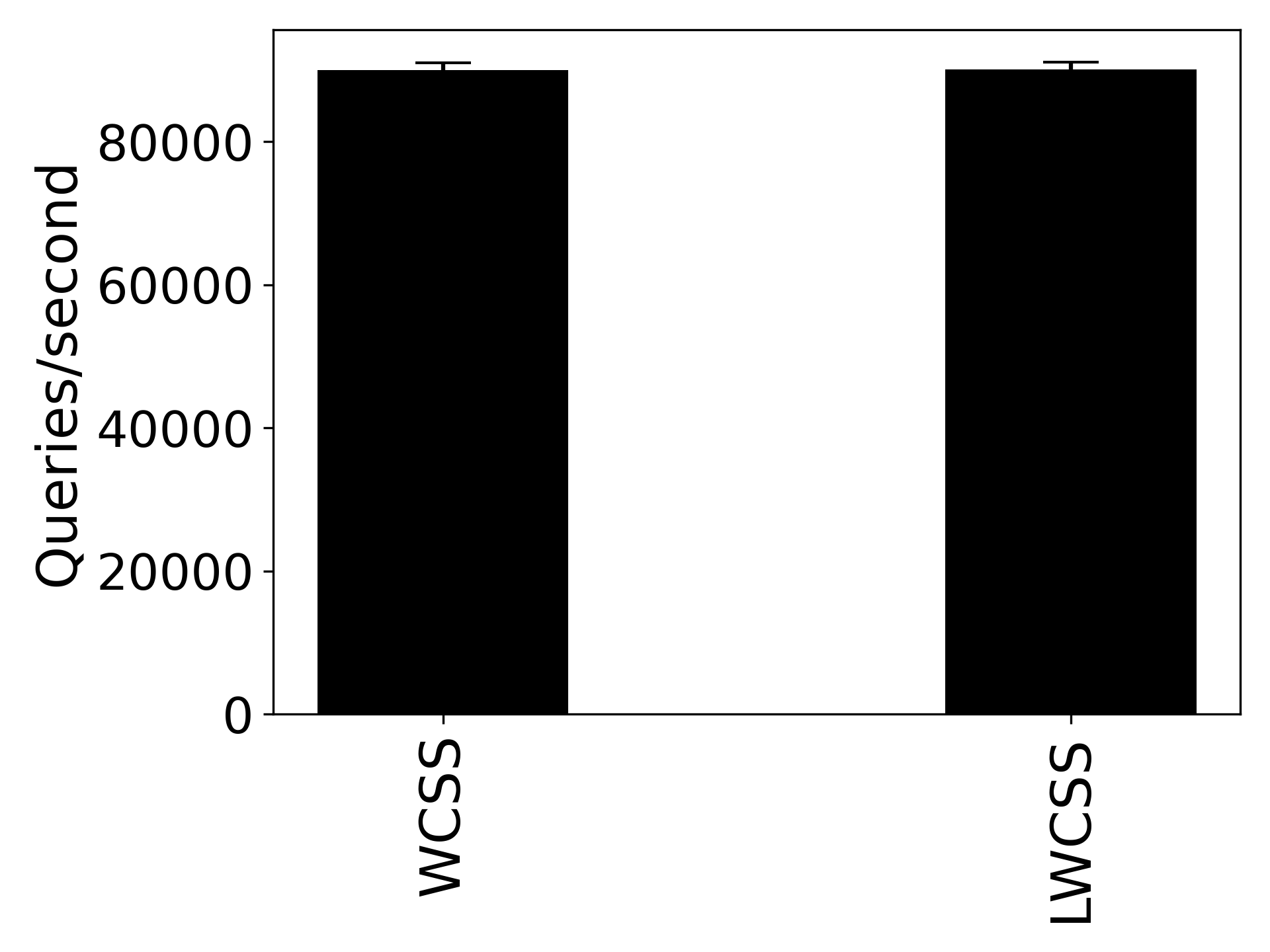}\label{fig:queries_perf}}
                \subfloat[Update Performance]{\includegraphics[width=\smatrixCellWidth]{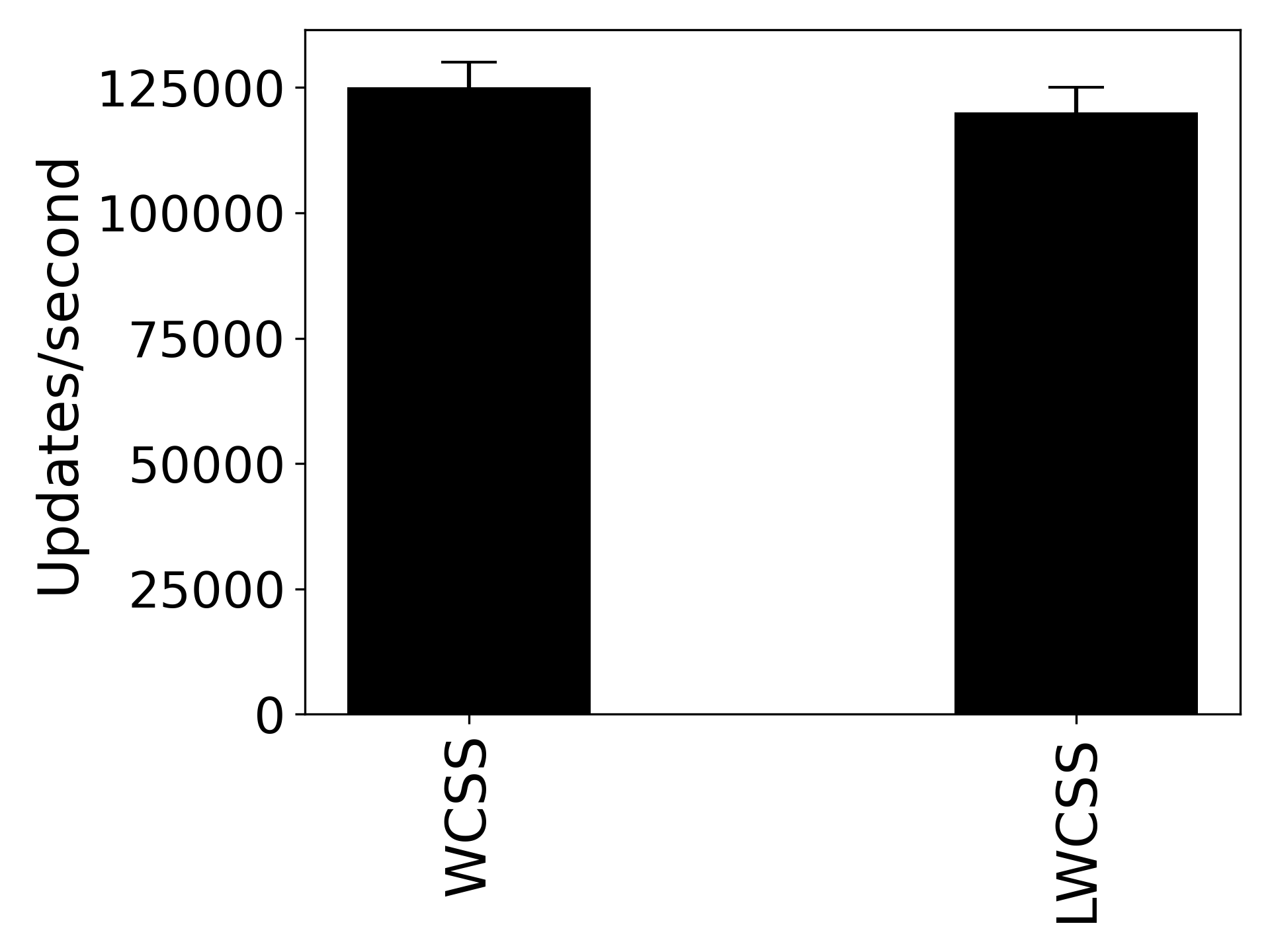}\label{fig:updates_perf}}
                \tabularnewline
		\end{tabular}
    		}
    \caption{Query and update performance using Chicago dataset and setting $W=2^{13}$.}
    \label{fig:controlled_params}
\end{figure*}

Figure~\ref{fig:lwcss_acc_vs_window} shows the RMSE as a function of window size using the Chicago dataset.
As seen, as the window size increases, the RMSE also increases. Additionally, the performance improves as the window is smaller because the ratio of single items becomes larger, as shown in Figure~\ref{fig:singles_ratio}.

\subsection{Performance Comparison}
Figures~\ref{fig:queries_perf} and~\ref{fig:updates_perf} show the performance of WCSS and LWCSS in terms of the update and query time respectively using the Chicago dataset and setting $W=2^{13}$.
The query performance of the two algorithms appears closely aligned due to their consistent operations. The update performance of \alg{} is slightly worse than WCSS due to the additional operation of inserting single occurence items into the Bloom filter.

\section{Future Work and Extensions}
\label{sec:future_works}


\subsection{An alternative approach: frequency prediction within time frames}

A potential direction for future work is to explore an alternative approach that learns item frequencies within each given time frame.
By predicting the frequencies of each item in the frame, we can apply previous work introduced by~\cite{hsu2019learning} to the sketch that tracks item frequencies within each frame. However, a challenge with this approach is that we need to retrain the predictor in each frame. One way to overcome this computational inefficiency in a streaming setting is: 
Instead of training a new predictor for each frame, we could employ a single predictor and retrain it periodically to adapt to shifts in the data distribution across frames.

Specifically, we can perform the retraining process after observing items for $M$ frames. The value of $M$ will be determined based on the extent of accuracy drop in the predictor; smaller values of $M$ should be set if the predictor incurs a higher accuracy drop. This periodic retraining aims to strike a balance between maintaining prediction accuracy and minimizing computational overhead.
To enhance the efficiency of the retraining process, we can utilize recent advances in transfer learning within deep learning~\cite{Ben12}. In particular, we can avoid retraining the model from scratch and shorten the training cycle by selectively retraining only specific layers in the model using information from recent frames. The main intuition behind this approach is that in many neural networks, the initial layers capture general features, while the later layers focus on specific features that are more dependent on the problem at hand (in our case, these later layers are sensitive to distribution shifts).
During the retraining process, we can freeze all hidden layers in the LSTM and fully-connected networks and then selectively retrain only those layers connected to the input and output of each network. As long as the dimensions of the frozen layers remain constant at any point in time, it is acceptable to reuse them. By preserving the learned general features and selectively updating the task-specific layers, we can leverage the knowledge gained from previous frames while efficiently adapting to the current data distribution.



As a future direction, we will compare this frame-based frequency prediction approach with our existing method using the next arrival predictor. This analysis could provide insights into the trade-offs between the two approaches and guide the selection of the appropriate technique based on the specific window size.

\subsection{General learned sliding window framework}
Stream processing encompasses three fundamental tasks: membership queries, frequency queries, and heavy hitter queries. Membership queries determine whether a given item is present within the sliding window. Frequency queries report the occurrence count of a specific item. Heavy hitter queries identify all items whose frequencies exceed a predetermined threshold.
Our research focuses on frequency queries over data streams within a sliding window setting. We recognized that the work proposed by~\cite{gou2020sliding} introduced a generic framework, termed Sliding Sketches, which can be applied to existing solutions for the above three tasks, enabling them to support queries over sliding windows. This work employed a similar approach of dividing the stream into frames (referred to as segments in their work). Consequently, our approach could intuitively be applied to this general framework.

\section{Related Work}
\label{sec:related}

The problem of estimating item frequencies over sliding windows was first studied in~\cite{ArasuM04}.
To estimate frequency within a $W\epsilon$ additive error over a window of size $W$, their algorithm requires $O(\oneOverE\log^2\oneOverE\log W)$ bits of memory. This memory requirement was later optimized to $O(\oneOverE\log W)$ bits as highlighted in~\cite{LeeT06}. Hung and Ting in~\cite{HungLT10} further refined this by improving the update time to a constant and locating all heavy hitters in the optimal $O(\oneOverE)$ time. The WCSS algorithm, as introduced in~\cite{ben2016heavy}, also provides frequency estimates in constant time.

Algorithms with predictions is, as we have stated, a rapidly growing area.  The site \cite{awpweb} contains a collection of over a hundred papers on the topic.  
The idea of using predictions to specifically improve frequency estimation algorithms appears to have originated with \cite{hsu2019learning}, where they augmented a learning oracle of the heavy hitters into frequency estimation sketch-based algorithms. Later~\cite{jiang2019learning} and ~\cite{shahout2024learning} explored the power of such an oracle, showing that it can be applied to a wide array of problems in data streams.
All these papers use neural networks to learn certain properties of the input. We, however, differ from those papers because we consider the sliding window setup, in which properties derived from window resolution can differ significantly from those derived from the entire stream, and therefore, other predictions are required.

\section{Conclusion}
\label{sec:conclusion}

We have presented a novel method to improve sliding window algorithms for approximate frequency estimation by incorporating a learning model that filters out low-frequency ``noisy'' items. 
Past research on learning-augmented algorithms does not perform well in the sliding window settings due to variations between the properties of the sliding window resolution and the entire data stream.

We have demonstrated the benefits of our design both analytically and empirically.

\noindent\textbf{Code Availability:} All code is available online~\cite{opensource}.

\section*{Acknowledgments}
Rana Shahout was supported in part by Schmidt Futures Initiative and Zuckerman Institute. Michael Mitzenmacher was supported in part by NSF grants CCF-2101140, CNS-2107078, and DMS-2023528.
\newpage

\bibliographystyle{plain}
\bibliography{refs}

\end{document}